\renewcommand{\cite}[1]{\citet{#1}}
\newtheorem{theorem}{Theorem}[section]
\newtheorem{corollary}[theorem]{Corollary}
\newtheorem{definition}[theorem]{Definition}
\newtheorem{problem}[theorem]{Problem}
\newcommand{\cX}{{\cal X}}
\newcommand{\cF}{{\cal F}}
\newcommand{\qed}{\mbox{ }~\hfill~$\Box$ \vspace{1ex} }
\newenvironment{proof}{\noindent{\sc Proof: }}{ \qed }
\newcommand{\rmII}{\text{\it I\kern-.08em I\,}}
\newcommand{\rmIII}{\text{\it I\kern-.08em I\kern-.08em I\,}}
\newcommand{\rmIV}{\text{\it I\kern-.08em V\,}}
\newcommand{\bilinear}[2]{{\langle #1 , #2 \rangle}}
\begin{document}

\title{Finance without Probabilistic Prior Assumptions}
\author{Frank Riedel\thanks{ Financial Support through the German Research
Foundation, International Graduate College ``Stochastics and Real World Models'', Research Training Group EBIM, ``Economic Behavior and Interaction Models'', and Grant Ri--1128--4--1
  is gratefully acknowledged.}
\\
Institute of Mathematical Economics\\
Bielefeld University
}
\date{\today\\
$\,$\hfill\\
{\small\sc{This paper is dedicated to the students of the class ``Advanced Topics in Finance'' at Bielefeld University in Summer 2011}}}

\maketitle

\begin{abstract}
\noindent
We develop the fundamental theorem of asset pricing in a probability--free infinite--dimensional setup. We replace the usual assumption of a prior probability by a certain continuity property in the state variable.  Probabilities enter then endogenously as \emph{full support} martingale measures (instead of \emph{equivalent} martingale measures). A variant of the Harrison--Kreps--Theorem on viability and no arbitrage is shown. Finally, we show how to embed the superhedging problem in a classical infinite--dimensional linear programming problem.
\end{abstract}

{\footnotesize{ \it Key words and phrases:} Probability--Free Finance, Fundamental Theorem of Asset Pricing, Full--Support Martingale Measure, Superhedging, Infinite--Dimensional Linear Programming \\
{\it \hspace*{0.6cm} AMS subject classification:  91G99, 90C34, 90C48}\\
{\it \hspace*{0.6cm} JEL subject classification: G12, D53} }

\newpage

\section{Introduction}

In finite state spaces, the basic hedging and pricing results of Mathematical Finance can be derived with simple algebra. In the  binomial tree model of \cite{Coxetal79}, simple linear equations suffice to derive the conditions for no arbitrage, the  equivalent martingale measure, the hedging portfolio, and, in turn, the unique no-arbitrage price. More generally, the Fundamental Theorem of Asset Pricing and the existence of equivalent martingale measures are easily developed by linear algebra even in incomplete market models.

As soon as one moves to more complex, infinite models, or continuous time, one starts by fixing a prior probability $P$ on the state space. From a modeling point of view, this seems natural if one has the picture of an asset price as a random variable in mind. It is also very useful because it allows us to use the powerful methods of probability theory. From a conceptional point of view, one might nevertheless ask: what is the role of the prior probability $P$, and do we really need it to develop Mathematical Finance?

This question might seem a purely technical one at first sight. However, in the light of the recent discussions on Knightian uncertainty, it enters center stage. If the modeler, or the investor, cannot be certain about the choice of the prior, this can result in sensitive and unreliable prescriptions for pricing and hedging of derivatives.

On the other hand, one might reply that the measure $P$ does not really matter for pricing and hedging as the focus is on pricing or martingale measures only. These martingale measures have to be \emph{equivalent} to the original $P$, however, and thus the choice of $P$ plays an implicit role by determining the sets of measure zero, or in other words, the black swan events that get price zero (even if they should not).

The economic role of $P$ has also been  pointed out in  \cite{HarrisonKreps79}; these authors relate the notions of no arbitrage and viability with economic equilibrium:
 \begin{quotation}
  [The probability measure $P$ ] $\ldots$ determines the space $X \, [=L^2(\Omega,\cF,P)]$ of contingent claims, it determines the continuity requirement for [the preference relation] $\succeq \in A$, and \emph{through its null sets} it plays a role in the requirement that $\succeq \in A$ be strictly increasing. (Emphasis by me)
\end{quotation}

Recently,  the attempt to model uncertain volatility has led to very interesting models where one cannot fix the null sets, and where it is not even clear whether it is rational to require knowledge of all null sets. The theory of risk measures and the decision theoretic literature on Knightian ambiguity lead to multiple prior models.
When volatility is unknown in continuous time, one has to work with mutually singular measures, and one cannot fix the null sets in advance (see \cite{Peng07} for the calculus, \cite{Vorbrink10} for no arbitrage considerations, and \cite{EpsteinJi11} for the basic economics and asset pricing consequences).

I therefore think that it is useful to develop the foundations of Finance in a prior--free model.

In this note, I develop the fundamental theorem of asset pricing without any probabilistic assumptions on the general measurable state space. Instead,  a topological continuity requirement is used: the asset payoffs are continuous in the state variable. For modeling purposes, this is merely a technical assumption that is usually satisfied (as one can construct models where the asset payoffs are projections on some product space, and hence continuous)\footnote{In general, I think that one cannot dispense with some topological requirement; e.g., it seems impossible to work with the space of bounded, measurable functions only as there are no strictly positive linear functionals on that space.}.

As one cannot speak of \emph{equivalent} martingale measures in this context, another notion of pricing measure is needed. It turns out that the right concept is \emph{full support} martingale measure. The pricing functional has to assign positive prices to all open sets in order to avoid arbitrage opportunities.
Hence, probabilistic models do enter again, as a consequence of the no arbitrage assumption, but without imposing any probabilistic assumptions ex ante.

With the help of the new version of the fundamental theorem, one can easily characterize all arbitrage--free prices for derivatives as expectations under some full support martingale measure.

As it is important to understand the relations between Finance and Economics, we also give a variant of the Harrison--Kreps theorem which relates the economic viability of asset prices to the possibility to extend the price functional from the marketed space to the whole space of possible contingent claims while keeping strict positivity of the pricing functional.

As an application of the new setup, I show how to embed the superhedging problem in incomplete markets into the language of linear programming in infinite--dimensional spaces. The fact that superhedging and the upper bound on all no arbitrage prices are dual is well known, of course. In the probabilistic setting, it is not straightforward to embed the problems directly into the language of linear programming. This is why usually a combination of stochastic and optimization techniques is used in proving the duality. In our setup, the relation to linear programming is direct and obvious. To the best of my knowledge, this embedding and treatment of the superhedging problem has not been done before.

I am not aware of probability--free approaches to the foundations of Finance in general, in particular for the Fundamental Theorem. Students of Hans F\"{o}llmer\footnote{including the author of this note and Walter Willinger.} are well aware that the basic hedging argument in complete markets can be carried out   without imposing probabilistic prior assumptions even in continuous time; one uses F\"{o}llmer's  pathwise approach to It\^{o} integration (\citeyear{Foellmer81}) to develop the Black--Scholes--like hedging argument in a purely analytic way, compare  \cite{BickWillinger94}. \cite{Foellmer99} develops these ideas and their relations to economics masterfully; see also \cite{Foellmer01} for a similar account in English, and the textbook by \cite{Sondermann06}. Here, our focus is on the Fundamental Theorem and the relation to economic equilibrium in the spirit of Harrison and Kreps.

The paper is set up as follows. The next section develops the prior--free model of finance and proves the fundamental theorem of asset pricing. We also provide a Harrison--Kreps--like theorem on the relation between no arbitrage and economic viability. Section 3 contains our treatment of the superhedging problem as a linear program in infinite dimensions.

\section{The Fundamental Theorem of Asset Pricing}

The usual approach to a no--arbitrage based theory of asset pricing starts with a probability space $\left(\Omega, \cF, P\right)$; future asset prices are  taken to be nonnegative random variables $S_d: \Omega \to \mathbb R_+$ that can be bought at prices $f_d \ge 0$ today.  In particular, one implicitly assumes that the distribution function
of the future asset prices is known.  A fortiori, it is presumed that the null sets, or prices that we can possibly never observe, are known. In light of the recent discussion about Knightian uncertainty and developments in finance concerning ambiguity about stochastic volatility\footnote{See \cite{Vorbrink10} and \cite{EpsteinJi11}, e.g.}, one might question this assumption.

  The fundamental theorem of asset pricing  states that under ($P$--almost sure) no arbitrage there exists an \emph{equivalent} martingale measure $P^*$; this measure has the same null sets as $P$ and satisfies the pricing or martingale relation
$$E^{P^*} S_d = f_d\qquad (d=0,\ldots,D)\,.$$

In discrete models, it is easy to develop the fundamental theorem and the subsequent hedging analysis without any probabilistic assumption. When teaching these results, I have always asked myself where one really needs the prior $P$. I present here an alternative approach that is, ex ante, probability--free. Instead, I work on a measurable space with a nice topological structure.

\subsection{The Topological Model of Finance}

Let $(\Omega,d)$ be a Polish space, i.e.  $d$ is a complete metric on $\Omega$ that allows for a countable dense subset $\mathbb Q :=\left\{(q_n) : n=1,2,\ldots,\right\} \subset \Omega$. Let $\cF$ be the Borel $\sigma$--field on $\Omega$, generated by the open subsets.

There are $D+1$ financial assets traded at time $0$ at prices $f_d \ge 0, d=0,1,\ldots, D$ with an uncertain payoff at time $1$.
As usual in Finance, we assume that there is a riskless asset $S_0(\omega)=1$ for all $\omega\in\Omega$ with price $f_0=1$ (so the interest rate is normalized to $0$). The uncertain assets have a payoff $S_d(\omega)$ that is continuous in $\omega$. Continuity in $\omega$ is stronger than mere measurability, of course; but it comes for free in most models. Note that we speak of continuity in the state variable, not in time, here. Usually, one can construct the model in such a way that $S_d$ is a projection mapping on some product space. These mappings are automatically continuous in the associated topology.

\begin{definition}
  A portfolio is a vector $\pi\in\mathbb R^{D+1}.$  A portfolio $\pi$ is called an arbitrage if we have
  \begin{align*}
    \pi \cdot f = \sum_{d=0}^D \pi_d f_d &\le 0 \\
    \pi \cdot S(\omega) &\ge 0 \qquad \mbox{for all $\omega\in\Omega$}\\
    \pi \cdot S(\omega) &> 0 \qquad \mbox{for some $\omega\in\Omega$}\,.
  \end{align*}
  We say that the market $(f,S)$ is arbitrage--free if there exists no arbitrage $\pi$.
\end{definition}

Note how we adapt the notion of arbitrage. Instead of assuming that a riskless gain is possible $P$--almost surely, we require a riskless gain for all states $\omega$, and a strictly positive gain for some state. By continuity of asset payoffs in states, this leads to a strictly positive payoff on an open subset $U$ of the state space.

\subsection{The Fundamental Theorem}

The classic fundamental theorem of asset pricing says that markets are arbitrage free if and only if one has a pricing or martingale measure $P^*$ that is equivalent to the prior $P$. The equivalence is important as the pricing measure must assign a positive price to all events that have positive probability under the prior $P$. In our version of the fundamental theorem, the equivalence to $P_0$ is replaced by the requirement of full support\footnote{The support of a probability measure $Q$ is the smallest closed set $F\subset \Omega$ with $Q(F)=1.$}: the pricing measure needs to assign positive prices to all open sets $U \subset \Omega$.

We fix our language with the following definition.

\begin{definition}
  A probability measure $P$ on $(\Omega,\cF)$ is called a martingale measure (for the market $(f,S)$) if it satisfies
  $\int S_d \, dP=f_d, d=1,\ldots,D$. If $P$ has also full support, we call it full support martingale measure.
\end{definition}

We are now able to characterize the absence of arbitrage in our model.

\begin{theorem}\label{TheoremFTAP}
  The market $(f,S)$ is arbitrage--free if and only if there exists a full support martingale measure. 
\end{theorem}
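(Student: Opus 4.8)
The easy direction is sufficiency: if $P^*$ is a full support martingale measure, suppose $\pi$ were an arbitrage. Integrating $\pi\cdot S$ against $P^*$ gives $\int \pi\cdot S\,dP^* = \sum_{d=0}^D \pi_d f_d = \pi\cdot f \le 0$, using $f_0=1=\int S_0\,dP^*$ and the martingale property for $d\ge1$. On the other hand $\pi\cdot S\ge0$ everywhere, and by continuity the set $U=\{\omega:\pi\cdot S(\omega)>0\}$ is open and nonempty, so it has strictly positive $P^*$-measure by full support; hence $\int\pi\cdot S\,dP^*>0$, a contradiction. So no arbitrage exists.

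For necessity, I would argue by a Hahn--Banach / separation argument on a suitable space of continuous functions. Work in $C_b(\Omega)$ (or, after fixing an equivalent bounded metric and using that $\Omega$ is Polish, a separable subspace generated by the $S_d$ together with a countable family of test functions separating points of a metrizable compactification). Consider the linear subspace $M=\{\,\pi\cdot S - \pi\cdot f : \pi\in\RR^{D+1}\,\}$ of ``marketed payoffs net of cost.'' No arbitrage says that $M$ meets the nonnegative cone only in functions that are not ``somewhere strictly positive'' on an open set --- but since a continuous nonnegative function that is positive somewhere is positive on an open set, in fact $M\cap C_b(\Omega)_+ = \{0\}$. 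The goal is a strictly positive continuous linear functional $\phi$ on $C_b(\Omega)$ vanishing on $M$; normalizing by $\phi(1)$ (note $1\in C_b(\Omega)$ and $1-1 = S_0 - f_0 \in M$, so $\phi(1)=\phi$ of a constant and is positive) and applying a Riesz-type representation yields a probability measure $P^*$ with $\int S_d\,dP^* = f_d$; strict positivity of $\phi$ forces $P^*(U)>0$ for every nonempty open $U$, i.e.\ full support.

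The main obstacle is that $C_b(\Omega)$ on a non-compact Polish space is not nicely behaved: its dual is a space of finitely additive measures, strictly positive continuous functionals need not be represented by countably additive probability measures, and the positive cone may have empty interior, so a naive Hahn--Banach separation does not immediately deliver what we want. I would handle this by exploiting separability: enumerate a countable dense set $\mathbb{Q}=\{q_n\}$ and a countable family of bump functions $\{g_k\}\subset C_b(\Omega)$, $0\le g_k\le 1$, whose supports form a basis of the topology. One then builds the functional as a suitable convergent combination, or one-by-one via a sequential separation argument ensuring $\phi(g_k)>0$ for every $k$ while staying orthogonal to the finite-dimensional space $M$; a summability/normalization step (e.g.\ controlling $\sum 2^{-k}\phi(g_k)$) keeps the limit a bona fide countably additive probability. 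An alternative route, which I would mention, is to reduce to the compact case by a one-point or Stone--\v Cech-type compactification $\hat\Omega$, extend $S_d$ continuously (possible if the $S_d$ are bounded; if not, pass to a bounded homeomorphic recoordinatization first), run the standard finite-dimensional-image separation argument of Harrison--Kreps / Yan on $C(\hat\Omega)$ to get a Radon probability $\hat P^*$, and then argue that no arbitrage prevents mass from escaping to $\hat\Omega\setminus\Omega$, so $\hat P^*$ restricts to a full support probability on $\Omega$. The delicate point in either approach is precisely the passage from ``strictly positive on a dense/spanning family of open sets'' to genuine full support together with countable additivity, which is where the Polish (hence $\sigma$-compact-friendly, second countable) structure is essential.
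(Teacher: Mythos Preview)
Your argument for the easy direction (existence of a full-support martingale measure implies no arbitrage) is correct and matches the paper's.

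For the hard direction, however, you are working much harder than necessary, and the two routes you sketch both have unresolved difficulties you yourself flag. The paper avoids infinite-dimensional separation entirely. Its key idea is to \emph{first} exhibit one full-support probability under which all $S_d$ are integrable (take $\mu_0=\sum 2^{-n}\delta_{q_n}$ on a countable dense set, then pass to $d\mu_1 = c(1+\max_d S_d)^{-1}\,d\mu_0$), and observe that the set $C$ of all such full-support integrating probabilities is nonempty and convex. One then pushes $C$ forward to $\RR^D$ via $\mu\mapsto\bigl(\int (S_d-f_d)\,d\mu\bigr)_{d=1}^D$, obtaining a convex set $K\subset\RR^D$, and asks whether $0\in K$. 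If not, a \emph{finite-dimensional} separating hyperplane gives $\phi\in\RR^D$ with $\int \phi\cdot(S-f)\,d\mu\ge 0$ for every $\mu\in C$ and $>0$ for some; since $C$ contains measures concentrating arbitrary mass near any point, this forces $\phi\cdot(S(\omega)-f)\ge 0$ for every $\omega$ and $>0$ for some $\omega_0$, manufacturing an arbitrage. Thus $0\in K$, and the witnessing $\mu\in C$ is already a countably additive, full-support martingale measure by construction---no Riesz representation, no worry about finitely additive duals, no compactification.

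By contrast, your route (b) has a real problem: if the $S_d$ are unbounded there is no ``bounded homeomorphic recoordinatization'' that preserves the linear portfolio map $\pi\mapsto\pi\cdot S$, so you cannot simply push the problem to $C(\hat\Omega)$; and the claim that no arbitrage prevents mass from escaping to $\hat\Omega\setminus\Omega$ would itself need proof. Route (a) is too vague to assess---orthogonality to the finite-dimensional $M$ is easy, but simultaneously forcing $\phi(g_k)>0$ for all $k$ while landing on a \emph{countably additive} limit is exactly the hard part, and you do not indicate how to do it. The paper's finite-dimensional trick is both simpler and handles unbounded $S_d$ without extra effort.
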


\begin{proof}(Easy Part)
Suppose that we have a full support martingale measure $P^*$. Let $\pi$ be an arbitrage. By continuity of $S$ in $\omega$,  there exists an open set $U\subset\Omega$ such that
$$\pi \cdot S(\omega) > 0$$ holds true for all $\omega \in U$. As $P^*$ has full support, it puts positive measure on $U$.  $\pi \cdot S$ is nonnegative and positive on a set of positive $P^*$--measure, so we have  $E^{P^*}  \pi \cdot  S = \int \pi \cdot S \, dP^* > 0$. The contradiction
$$0 \ge f \cdot \pi = \pi \cdot E^{P^*} S =E^{P^*} \left[ \pi \cdot  S\right] > 0$$ follows. This proves the easy part of the theorem.
\end{proof}

The more demanding direction of the fundamental theorem requires some preparation. First, let us note that measures with full support exist on Polish spaces. Just take our dense sequence $(q_n)_{n=1,2,\ldots}$ and define a probability measure $$\mu_0:=\sum_{n=1}^\infty 2^{-n} \delta_{q_n}$$ that puts weight $2^{-n}$ on $q_n$. As the sequence is dense, every nonempty open subset $U$ contains at least one $q_n$; hence $\mu_0(U)>0$.

We can now go on  and find even probability measures with full support under which all payoffs are integrable (note that we did not assume that the payoffs $S_d$ are bounded functions). Just let $M(\omega):=\max_{d=1,\ldots,D} S_d(\omega)$ and define a new measure $\mu_1$ via
$$\mu_1(A) = \int_A \frac{c}{1+M} d\mu_0$$ for Borel sets $A$ and the normalizing constant $c=\left( \int_\Omega \frac{1}{1+M} d\mu_0\right)^{-1}$.  As the density is strictly positive, $\mu_1$ has full support. Every $S_d$ is $\mu_1$--integrable because of $$\frac{S_d}{1+M} \le 1 \,.$$

We conclude that the  set
$C$ of probability measures on $(\Omega,\cF)$ with full support that make each $S_d$ integrable is non--empty. It is clear that $C$ is convex.

After this preparation, we can now begin with the typical separation argument (and we follow the exposition in \cite{FoellmerSchied02book} here). Let $R_d:=S_d-1, d=1,\ldots,D$ be the returns of the uncertain assets. Define the nonempty convex subset of $\mathbb R^D$
$$K:=\left\{ \begin{pmatrix}
               \int R_1 d\mu \\
               \vdots \\
               \int R_D d\mu \\
             \end{pmatrix}
: \mu\in C \right\}\,.$$ A full support martingale measure exists if and only if  $0\in K$. So let us assume, to achieve a contradiction, that $0\notin K$.

By the separation theorem in $\mathbb R^D$, there exists a vector $0\not= \phi \in \mathbb R^D$ such that $\phi \cdot k \ge 0$ for all $k\in K$ and $\phi \cdot k >0$ for some $k\in K$.
As we have $$\int_\Omega \phi \cdot R(\omega) \mu(d\omega) \ge 0 $$ for all $\mu\in C$, we must have $$\phi \cdot R(\omega) \ge 0$$ for all $\omega \in \Omega$.

The additional strict inequality
 $$\int_\Omega \phi \cdot R(\omega) \mu(d\omega) > 0 $$
for some $\mu \in C$ implies that there must exist $\omega_0 \in\Omega$ with
$\phi \cdot R(\omega_0) >0$.

Now define a portfolio $\pi$ by setting $\pi_0=-\sum_{d=1}^D \phi_d f_d $ and $\pi_d=\phi_d, d=1,\ldots,D$. Then $\pi \cdot f =0$, and
$$\pi \cdot S(\omega)= \pi_0 + \sum_{d=1}^D \pi_d S_d(\omega)
= \sum_{d=1}^D \pi_d Y_d(\omega) \ge 0\,,$$
as well as $\pi \cdot S(\omega_0)>0$. Hence, $\pi$ is an arbitrage, a contradiction.
We conclude that $ 0 \in K$.  This proves the fundamental theorem of asset pricing in our topological context.

\subsection{The Harrison--Kreps Theorem}

After the appearance of the Black--Scholes--Merton theory, a certain confusion about the magic of the formula and its supposed independence of any assumptions on preferences reigned.
Harrison and Kreps clarified the situation in their seminal paper by relating the concept of no arbitrage and economic viability. We  show how to obtain an analog to their result in our prior--free  model.

Let us assume in this section that $(\Omega,d)$ is also compact.

Let $\cX:=C(\Omega,d)$ be our space of net contingent trades at time $1$. Our financial market $(f,S)$ generates a marketed subspace
$$M:=\langle S_0,S_1,\ldots,S_D\rangle = \left\{ \pi \cdot S ; \pi \in \mathbb R^{D+1}\right\}\,.$$
Under no arbitrage, two portfolios $\pi$ and $\psi$ that lead to the same payoff $\pi \cdot S = \psi \cdot S$ must have the same value at time $0$, or $\pi \cdot f = \psi \cdot f$.
The linear mapping -- the price functional --
$$\phi : M \to \mathbb R$$ given by
$\phi\left(\pi\cdot S\right)=\pi \cdot f$
is then well--defined.

An economic agent is given by  a complete and transitive   relation $\succeq$ on $\cX$. In addition, we assume that $\succeq$ is continuous in the sense that for all $X\in\cX$ the better--than-- and worse--than--sets
$$ \left\{Z\in\cX : Z \succeq X\right\} ,  \left\{Z\in\cX :  X \succeq  Z \right\}
$$ are closed (in the norm topology on $\cX$). Finally, the agent's preferences are strictly monotone: if $X \in \cX$ satisfies $X \ge 0$ and $X\not=0$, then for all $Z\in\cX$, we have
$Z+X \succ Z$ (with the obvious definition of $\succ$).

We say that the market $(f,S)$ is viable if there exists an agent $\succeq$ such that no trade is optimal given the budget $0$: for all $\pi\in\mathbb R^{D+1}$ with $\pi \cdot f \le 0$ we have $0 \succeq \pi \cdot S $.

%
%

\begin{theorem}
  The market  $(f,S)$ is viable if and only if there exists a strictly positive linear functional $\Phi : \cX\to\mathbb R$ such that
   $\Phi(X)=\phi(X)$ for $X\in M$.
\end{theorem}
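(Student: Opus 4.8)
The plan is to prove the two implications separately, and to route the harder (``necessity'') direction through the Fundamental Theorem~\ref{TheoremFTAP} so that no fresh separation argument is needed. The easy direction is to fabricate an agent out of the functional $\Phi$; the hard direction is to observe that viability rules out arbitrage, then invoke Theorem~\ref{TheoremFTAP} to obtain a full support martingale measure and integrate against it.

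For the direction ``$\Phi$ exists $\Rightarrow$ viable'': given a strictly positive linear $\Phi:\cX\to\RR$ agreeing with $\phi$ on $M$, define a relation by $X\succeq Z$ iff $\Phi(X)\ge\Phi(Z)$. Completeness and transitivity are inherited from the order on $\RR$. Since $\Omega$ is compact, $1=S_0\in\cX$ and $\Phi$, being a positive linear functional on $C(\Omega,d)$, is automatically norm-bounded: from $Y+\|Y\|_\infty\cdot 1\ge 0$ and $\|Y\|_\infty\cdot 1-Y\ge 0$ one gets $|\Phi(Y)|\le\|Y\|_\infty\,\Phi(1)$, so $\Phi$ is continuous and the sets $\{Z\in\cX:\Phi(Z)\ge\Phi(X)\}$ and $\{Z\in\cX:\Phi(Z)\le\Phi(X)\}$ are closed. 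Strict monotonicity is exactly strict positivity: if $X\ge 0$, $X\ne 0$, then $\Phi(X)>0$, hence $\Phi(Z+X)>\Phi(Z)$, i.e. $Z+X\succ Z$. Finally, if $\pi\cdot f\le 0$ then $\Phi(\pi\cdot S)=\phi(\pi\cdot S)=\pi\cdot f\le 0=\Phi(0)$, so $0\succeq\pi\cdot S$, and the market is viable.

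For the converse: first, viability implies no arbitrage. If $\pi$ were an arbitrage, then $\pi\cdot S\in\cX$, $\pi\cdot S\ge 0$ and $\pi\cdot S\ne 0$, so strict monotonicity with $Z=0$ gives $\pi\cdot S\succ 0$; but $\pi\cdot f\le 0$, so viability gives $0\succeq\pi\cdot S$, and $\pi\cdot S\succ 0\succeq\pi\cdot S$ contradicts irreflexivity of $\succ$ (which holds since completeness forces reflexivity of $\succeq$). Hence the market is arbitrage-free, and Theorem~\ref{TheoremFTAP} yields a full support martingale measure $P^*$. Set $\Phi(X):=\int_\Omega X\,dP^*$, which is finite because $\Omega$ is compact and $X$ continuous, hence bounded. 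Then $\Phi$ is linear; it is strictly positive since any $X\in\cX$ with $X\ge 0$, $X\ne 0$ is strictly positive on some nonempty open set, which carries positive $P^*$-mass by full support, so $\int X\,dP^*>0$; and $\Phi$ extends $\phi$ on $M$, since $\Phi(\pi\cdot S)=\sum_{d=0}^D\pi_d\int S_d\,dP^*=\sum_{d=0}^D\pi_d f_d=\pi\cdot f=\phi(\pi\cdot S)$, using $\int S_0\,dP^*=1=f_0$ together with the martingale identities $\int S_d\,dP^*=f_d$.

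The step I expect to be the genuine obstacle is the converse, and the subtlety there is that preferences are \emph{not} assumed convex, so one cannot separate preference sets directly as in the original Harrison--Kreps argument. The device that resolves this is to let the topological Fundamental Theorem perform the separation, reducing the content of viability to the elementary fact that it excludes arbitrage. The remaining points --- automatic continuity of positive linear functionals on $C(\Omega,d)$ for compact $\Omega$, and strict positivity of $E^{P^*}[\cdot]$ coming from full support together with continuity of the payoffs --- are routine.
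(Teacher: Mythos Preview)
Your proof is correct and follows essentially the same route as the paper: in both directions you do exactly what the paper does --- construct the linear agent from $\Phi$ (noting automatic continuity of positive functionals on $C(\Omega,d)$), and conversely pass from viability to no arbitrage, invoke Theorem~\ref{TheoremFTAP}, and integrate against the resulting full support martingale measure. Your write-up simply makes explicit several verifications (continuity bound, strict positivity from full support, the contradiction from an arbitrage) that the paper leaves as one-line remarks.
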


\begin{proof}
A viable market $(f,S)$ must not admit arbitrage opportunities, as the agent would improve upon any portfolio by adding the arbitrage.
  If $(f,S)$ is arbitrage--free, the fundamental theorem \ref{TheoremFTAP} allows us to pick a full--support martingale measure $\mu$. We set $\Phi(X)=\int_\Omega X d\mu$ and obtain our desired extension of $\phi$. The extension is strictly positive as $\mu$ has full support.

  On the other hand, if $\Phi$ is a strictly positive extension of $\phi$, the linear preference relation
  $$X \succeq Y \quad\mbox{if and only if}\quad \Phi(X)\ge\Phi(Y)$$ defines an agent for whom no trade is optimal given initial budget $0$. Note that positive linear functionals on the Banach lattice $C(\Omega,d)$ are continuous. Thus, the preference relation is also continuous.
\end{proof}

\section{(Super)Hedging of Contingent Claims}

In the absence of complete markets, the benchmark for imperfect hedging procedures are given by super-- resp.~subhedging a given claim where one aims to find the cheapest portfolio that stays above resp.~the most expensive portfolio that stays below the claim. We develop here a purely non--probabilistic approach to superhedging by embedding the problem fully into the language of (infinite--dimensional) linear programming. To the best of our knowledge, this has not been done before.

In our setting, superhedging a claim is a linear problem where the variable, the portfolio, is finite--dimensional, but where the constraint, staying on the safe side, is infinite--dimensional, in the space of continuous (bounded) functions. As the dual of this space consists of countably additive measures, the dual program has thus  infinite--dimensional variable space. We show that the constraints imply that the dual program is to maximize the expectation of the claim over all martingale measures.

It is noteworthy that the dual constraints do not require that we look at \emph{full support} martingale measures. From the point of view of optimization, it is more natural to look at the set of martingale measures only as this set is closed, and we can thus find an optimal solution to our problem.

From a Finance point of view, one wants to establish the result that the cheapest superhedge is equal to the least upper bound for all no arbitrage prices. No arbitrage prices, however, are determined by full support martingale measures. The two numbers coincide  because the set of all full support martingale measures is dense in the set of martingale measures. The maximizer of the dual problem is not attained by such a full support martingale measure; a superhedging portfolio would be an arbitrage if its value was a no arbitrage price. Hence, the value cannot be attained by a full support martingale measure.

\subsection{Claims and No Arbitrage Prices}

\begin{definition}
  A derivative or (contingent) claim is a continuous mapping $H:\Omega \to \mathbb R_+$.
  $h \ge 0$ is called a no arbitrage price for $H$ if the extended market with $D+2$ assets and $f_{D+1}=h$ and $S_{D+1}=H$ admits no arbitrage opportunities.
\end{definition}

As usual, we can now apply the fundamental theorem to obtain a characterization of no arbitrage prices. If we take the expected value of the claim under a full support martingale measure, the new, extended market does have such a full support martingale measure as well, and the expected value is a no arbitrage price. If, on the other hand, $h$ is a no arbitrage price, the other direction of the fundamental theorem implies that it is the expected value under some full support martingale measure for the extended market. This measure has to be a full support martingale measure for the old market as well.

\begin{corollary}
  $h$ is a no arbitrage price for a claim $H$ if and only if $$h=\int H \,dP$$ for a full support martingale measure $P$.
\end{corollary}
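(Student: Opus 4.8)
The plan is to obtain the corollary as an immediate application of the fundamental theorem, Theorem~\ref{TheoremFTAP}, to the \emph{extended} market $(\tilde f,\tilde S)$ with $D+2$ assets, obtained from $(f,S)$ by appending the price $f_{D+1}=h$ and the payoff $S_{D+1}=H$. The whole substance of the argument is the observation that a full support martingale measure for $(\tilde f,\tilde S)$ is precisely a full support martingale measure $P$ for the original market $(f,S)$ that in addition satisfies $\int H\,dP=h$; once this equivalence is recorded, both implications drop out of the two directions of Theorem~\ref{TheoremFTAP} separately.

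For the ``if'' direction, suppose $h=\int H\,dP$ for some full support martingale measure $P$ of $(f,S)$. Since $H$ is continuous and nonnegative and $\int H\,dP=h<\infty$, the claim $H$ is $P$--integrable, and the identity $\int S_{D+1}\,dP=\int H\,dP=h=f_{D+1}$, together with the unchanged relations $\int S_d\,dP=f_d$ for $d=1,\dots,D$, shows that $P$ is a full support martingale measure for $(\tilde f,\tilde S)$. The easy part of Theorem~\ref{TheoremFTAP}, applied to the extended market, then tells us that $(\tilde f,\tilde S)$ is arbitrage--free, i.e.\ $h$ is a no arbitrage price for $H$.

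For the ``only if'' direction, assume $h\ge 0$ is a no arbitrage price, so that $(\tilde f,\tilde S)$ admits no arbitrage. The nontrivial direction of Theorem~\ref{TheoremFTAP}, applied to $(\tilde f,\tilde S)$, produces a full support martingale measure $P$ for the extended market; as in the proof of Theorem~\ref{TheoremFTAP} one works with the class of full support measures making \emph{all} traded payoffs integrable, so $H=S_{D+1}$ is automatically $P$--integrable and the integrals below are well defined. Reading off the martingale relation for the last asset gives $\int H\,dP=f_{D+1}=h$, while the relations for $S_1,\dots,S_D$ say exactly that $P$ is a martingale measure for the original market; and $P$ has full support by construction. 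Hence $h=\int H\,dP$ for a full support martingale measure of $(f,S)$.

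The proof is essentially bookkeeping, and I do not expect a genuine obstacle. The only point meriting a moment's care is the integrability of $H$ under the pricing measure, and this is handled cleanly by restricting, from the outset, to the class $C$ of full support measures under which every traded payoff is integrable, which makes the passage back and forth between the original and the extended market completely transparent.
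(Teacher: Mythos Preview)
Your proof is correct and follows exactly the route the paper takes: apply both directions of Theorem~\ref{TheoremFTAP} to the extended market $(\tilde f,\tilde S)$ and observe that a full support martingale measure for the extension is the same thing as a full support martingale measure for $(f,S)$ satisfying $\int H\,dP=h$. Your explicit remark on the integrability of $H$ is a welcome clarification that the paper leaves implicit.
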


\subsection{Superhedging and Linear Programming}

We show now how to tackle the problem of superhedging in a probability--free way. We keep the assumption that $\left(\Omega,d\right)$ is compact.

\begin{definition}
  A portfolio $\pi$ is called a superhedge for the claim $H$ if $\pi \cdot S (\Omega) \ge H(\omega)$ holds true for all $\omega\in\Omega$.  A portfolio $\pi$ is called a subhedge for the claim $H$ if $\pi \cdot S (\Omega) \le H(\omega)$ holds true for all $\omega\in\Omega$.
\end{definition}

 A seller of the claim $H$ naturally  asks for the cheapest superhedge. This leads to the following linear optimization problem.

\begin{problem}[Problem SH]
Find the cheapest superhedge for the claim $H$; minimize $\pi \cdot f$ over $\pi \in \mathbb R^{D+1}$ subject to $\pi \cdot S (\Omega) \ge H(\omega)$ for all $\omega\in\Omega$.
\end{problem}

There is a natural subhedging companion to the above problem; it can be treated by the same methods as
the superhedging problem. We leave the details to the reader.

Let us now formulate the dual program for the superhedging problem SH.

It is usual to work with dual pairs of vector spaces  in linear programming. Here, we have the (finite--dimensional) space of portfolios $\mathbb R^{D+1}$ whose dual is the space  itself, of course. The bilinear form is the usual scalar product that we denote by  $x\cdot y$ for $x,y \in \mathbb R^{D+1}$.

The linear superhedging constraint can be described by the claim $H$ and the linear mapping
$$B: \mathbb R^{D+1} \to C(\Omega,d)$$ with $B\pi=\pi\cdot S$
which maps portfolios to their payoffs in the space of continuous functions.

The dual space of $C(\Omega,d)$ is the space of all (countably additive) finite measures  $ca(\Omega,\cF)$. The bilinear form on these spaces is given by
$$\bilinear{X}{\mu}=\int_\Omega X d\mu,\qquad\left(X\in C(\Omega,d), \mu \in ca(\Omega,\cF)\right)\,.$$

In order to formulate the dual program, we need the adjoint mapping to $B$.
In our context, the adjoint mapping  is $B^* : ca(\Omega,\cF) \to \mathbb R^{D+1} $ given by
$$B^* \mu = \int_\Omega S d\mu\,.$$ It maps measures to the candidate security prices (if that measure was taken as a pricing measure).

As one easily checks, $B^*$ satisfies the defining condition for an adjoint mapping
$$\bilinear{ B \pi}{\mu} = \bilinear{\pi}{B^* \mu} \qquad\left(\pi \in \mathbb R^{D+1}, \mu\in ca\left(\Omega,\cF\right)\right)\,.$$

As the superhedging problem has no sign constraints on the portfolio, the dual program has to have an equality constraint of the form $B^* \mu = f$. The inequality $B \pi \ge H$ leads to the nonnegativity constraint $\mu \ge 0$ in the dual program. In the dual program, we thus maximize over all positive measures $\mu$ that satisfy the integral constraints
$$\int S_d d\mu= f_d, d=0,\ldots, D\,.$$
As $S_0=f_0=1$, such a $\mu$ is a probability measure. The other constraints say that $\mu$ is a martingale measure.

\begin{problem}[DSH]
Minimize the prices $\int_\Omega H d\mu$ over all martingale measures $\mu \in ca(\Omega,\cF)$.
\end{problem}

We are now ready to state the main duality result on probability--free superhedging.

\begin{theorem}
\begin{enumerate}
  \item The linear programs SH and DSH are dual to each other. Both problems have the same value.
  \item  Both programs have optimal solutions; in particular, there exists a superhedge $\pi^*\in\mathbb R^{D+1}$ and a martingale measure $P^*$ such that
      $$\pi^*\cdot f = \int H dP^*\,.$$
\end{enumerate}

\end{theorem}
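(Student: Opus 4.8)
The plan is to invoke the standard duality theorem for infinite--dimensional linear programs over dual pairs of vector spaces (as in the classical treatment of LP in locally convex spaces), together with a compactness argument to secure attainment on the dual side and a separate attainment argument on the primal side. Recall the setup already fixed above: the primal variable lives in $\mathbb R^{D+1}$ (paired with itself), the constraint space is $C(\Omega,d)$ paired with $ca(\Omega,\cF)$ via $\bilinear{X}{\mu}=\int_\Omega X\,d\mu$, the operator is $B\pi = \pi\cdot S$ with adjoint $B^*\mu = \int_\Omega S\,d\mu$, and we have verified $\bilinear{B\pi}{\mu}=\bilinear{\pi}{B^*\mu}$. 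Problem SH is: minimize $\pi\cdot f$ over $\pi$ with $B\pi \ge H$; Problem DSH is: maximize (the statement writes ``minimize'' but the dual of a minimization-with-$\ge$-constraint is a maximization, and the theorem's displayed equation $\pi^*\cdot f = \int H\,dP^*$ confirms it is really $\sup$) $\int_\Omega H\,d\mu$ over $\mu \ge 0$ with $B^*\mu = f$, i.e.\ over martingale measures.

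First I would establish \emph{weak duality}: for any feasible $\pi$ for SH and any feasible $\mu$ for DSH, $\pi\cdot f = \bilinear{\pi}{B^*\mu} = \bilinear{B\pi}{\mu} = \int_\Omega (\pi\cdot S)\,d\mu \ge \int_\Omega H\,d\mu$, using $B\pi \ge H$ pointwise and $\mu \ge 0$. Hence $\mathrm{val}(\mathrm{SH}) \ge \mathrm{val}(\mathrm{DSH})$; in particular both values are finite provided each problem is feasible. Feasibility of SH is immediate since $H$ is bounded (being continuous on the compact $\Omega$), so $\pi = (\|H\|_\infty, 0,\ldots,0)$ superhedges; feasibility of DSH requires a martingale measure to exist, which holds under no arbitrage by Theorem \ref{TheoremFTAP} --- but one should note that if the market admits arbitrage then SH has value $-\infty$ and the statement would fail, so implicitly (and it should perhaps be stated) we are assuming the market $(f,S)$ is arbitrage--free, equivalently that a full support martingale measure, hence a martingale measure, exists.

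Next, \emph{strong duality and dual attainment}. The cleanest route is to exhibit the value of SH directly as a $\sup$ over martingale measures using the fundamental theorem applied to the augmented market: for a scalar $h$, $h \ge \int_\Omega H\,d\mu$ fails for all martingale measures $\mu$ iff, after a separation argument analogous to the proof of Theorem \ref{TheoremFTAP} in the space $\mathbb R^{D+1}$ paired against the returns-plus-claim, there is a portfolio $\pi$ with $\pi\cdot f < h$ and $\pi\cdot S \ge H$ on all of $\Omega$; this matches the LP value of SH. Concretely, consider the set $K' = \{(\int S\,d\mu - f, \int H\,d\mu - h) : \mu \in C\} \subset \mathbb R^{D+1}\times\mathbb R$ where $C$ is the convex set of full-support integrating probability measures from the proof of the fundamental theorem; arguing as there, $0$ failing to lie in the closure of the relevant cone produces a separating portfolio, and chasing the inequalities shows precisely that $h < \mathrm{val}(\mathrm{SH})$ cannot coexist with $h$ dominating all martingale-measure prices. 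This yields $\mathrm{val}(\mathrm{SH}) = \sup\{\int H\,d\mu : \mu \text{ martingale measure}\}$. For attainment of the sup: the set of martingale measures is a weak-$*$ closed subset of the unit ball of $ca(\Omega,\cF) = C(\Omega,d)^*$ (closedness because $\mu \mapsto \int S_d\,d\mu$ is weak-$*$ continuous as each $S_d \in C(\Omega,d)$, using compactness of $\Omega$ so that $S_d$ is bounded; nonnegativity and total mass one are likewise weak-$*$ closed conditions), hence weak-$*$ compact by Banach--Alaoglu; since $\mu\mapsto\int H\,d\mu$ is weak-$*$ continuous, it attains its maximum at some $P^*$.

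Finally, \emph{primal attainment}. The SH feasible region $\{\pi : \pi\cdot S \ge H \text{ on }\Omega\}$ is a closed convex subset of $\mathbb R^{D+1}$ and the objective $\pi\cdot f$ is bounded below on it by strong duality. In finite dimensions a linear function bounded below on a nonempty closed convex set attains its infimum provided the set contains no line on which the objective is constant-and-we-can-slide, which here follows from no arbitrage: if $\pi^{(n)}$ is a minimizing sequence with $\|\pi^{(n)}\|\to\infty$, pass to a subsequence so $\pi^{(n)}/\|\pi^{(n)}\| \to \bar\pi \ne 0$; then $\bar\pi\cdot S \ge 0$ on $\Omega$ (dividing the constraint by $\|\pi^{(n)}\|$ and using $H$ bounded) and $\bar\pi\cdot f \le 0$ (dividing the objective bound), and since no arbitrage forbids $\bar\pi\cdot S > 0$ anywhere we get $\bar\pi\cdot S \equiv 0$; then $\bar\pi\cdot f = \langle \bar\pi, f\rangle = \langle B\bar\pi, P^*\rangle = 0$ as well, so $\pi^{(n)} - t\bar\pi$ stays feasible with unchanged objective for all $t$, and we may replace the sequence by a bounded one. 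A bounded minimizing sequence has a convergent subsequence with limit $\pi^*$ feasible and optimal, giving $\pi^*\cdot f = \mathrm{val}(\mathrm{SH}) = \int_\Omega H\,dP^*$.

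I expect the main obstacle to be the strong-duality equality $\mathrm{val}(\mathrm{SH}) = \mathrm{val}(\mathrm{DSH})$ without a duality gap: in infinite--dimensional LP a gap can occur unless one verifies a constraint-qualification/closedness condition on the image cone $\{(B\pi - s, \pi\cdot f + r) : \pi \in \mathbb R^{D+1}, s \in C(\Omega,d)_+, r \ge 0\}$, or bypasses it, as sketched above, by re-running the finite--dimensional separation argument of Theorem \ref{TheoremFTAP} on the augmented market --- the latter is the route I would actually take, since it keeps everything in $\mathbb R^{D+1}$ and reuses machinery already in the paper. The secondary subtlety is the (unstated) standing assumption that the market is arbitrage--free, without which SH is unbounded below and the theorem is vacuous or false.
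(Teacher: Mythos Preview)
Your overall architecture---weak duality, then strong duality, then attainment on both sides---matches the paper, and your attainment arguments are essentially the paper's (the paper handles primal attainment by first reducing to linearly independent $S_0,\dots,S_D$ so that $\eta^*\cdot S\equiv 0$ forces $\eta^*=0$ outright, rather than projecting out the recession direction as you do; both are fine). The substantive divergence is in how you close the duality gap.

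The paper does \emph{not} run a separation argument on the augmented market. Instead it treats DSH as the primal in the Anderson--Nash sense (variable $\mu\ge 0$, equality constraint $\int S\,d\mu=f$), so that SH becomes the dual. The point of this reversal is that the image cone appearing in the Anderson--Nash strong-duality criterion is then
\[
C=\Bigl\{\Bigl(\textstyle\int S\,d\mu,\ \int H\,d\mu\Bigr):\mu\in ca(\Omega,\cF)_+\Bigr\}\subset\mathbb R^{D+2},
\]
which is \emph{finite}-dimensional---not the infinite-dimensional cone in $C(\Omega,d)\times\mathbb R$ that you write down. Closedness of $C$ is verified by a short compactness argument: along any sequence $(\mu_n)$ whose images converge, $\mu_n(\Omega)=\int S_0\,d\mu_n$ stays bounded, hence $(\mu_n)$ is weak*--relatively compact (the positive cone being Prohorov-metrizable on a Polish space), and any weak* limit witnesses that the limit point lies in $C$. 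Anderson--Nash Theorem~3.10 then gives no gap.

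Your proposed route---rerunning the FTAP separation on the extended market $(f,h;S,H)$---can be made to work, but your sketch does not close it: the assertion that ``$0$ failing to lie in $K'$ produces a separating portfolio, and chasing the inequalities shows $h<\mathrm{val}(\mathrm{SH})$ cannot coexist with $h$ dominating all martingale-measure prices'' suppresses the case split on the sign of the separating coefficient attached to $H$. A negative coefficient yields a superhedge at cost $h$, but a positive coefficient yields only a \emph{sub}hedge at cost $h$, which does not by itself contradict $h<\mathrm{val}(\mathrm{SH})$; you then need to invoke the subhedging value (or handle the replicable case separately) before you can squeeze out the equality. None of this is fatal, but it is exactly the bookkeeping that the paper's cone-closedness check packages into one compactness lemma. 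In short: the paper's approach buys a shorter, citation-based proof by recognising that the ``hard'' cone is finite-dimensional once DSH is taken as primal; your approach would be more self-contained but needs the missing case analysis to be complete.
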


The details of the proof are postponed to the next subsection. Here, we want to point out that our (semi)--infinite--dimensional setup requires some care as the strong duality of the finite--dimensional linear programming theory carries over to infinite dimensions only under additional technical assumption. For our setup, it is sufficient to show that the cone
$$ C:= \left\{ \left( \int S d\mu, \int H d\mu\right) : \mu \in ca\left(\Omega,\cF\right)_+\right\}\subset \mathbb R^{D+2}$$ is closed. This follows from the fact that the cone of all positive measures in $ca\left(\Omega,\cF\right)$ is metrizable; a converging sequence in $C$ needs to stay bounded because of $S_0=1$. As such sequences are relatively weak*--compact, we find a converging subsequence of measures; see the next subsection for more details.

From a Finance point of view, one would like to go a step further and show that the superhedging value is also the least upper bound for all no arbitrage prices. But this is immediate from the fact that the set of full support martingale measures is dense in the set of martingale measures.
However, a maximizer cannot be  in the set of full support martingale measures as the superhedging portfolio leads to an arbitrage.

\begin{corollary}
\begin{enumerate}
\item  The price of the cheapest superhedge for a claim $H$ is equal to the least upper bound on all no arbitrage prices for $H$.
\item The price of the most expensive subhedge for a claim $H$ is equal to the greatest lower bound on all no arbitrage prices for $H$.
\end{enumerate}

\end{corollary}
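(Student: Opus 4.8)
The plan is to read the statement off from three facts already in hand: the duality Theorem (the superhedging program SH and its dual DSH share a common, attained value), the no arbitrage price Corollary (the no arbitrage prices of $H$ are exactly the numbers $\int_\Omega H\,dP$ with $P$ a full support martingale measure), and the density of the full support martingale measures in the set of all martingale measures. Write $\mathcal M\subset ca(\Omega,\cF)$ for the convex set of martingale measures and $\mathcal M^{fs}\subset\mathcal M$ for those of full support. Throughout one assumes the market is arbitrage--free, since otherwise $H$ has no no arbitrage price and there is nothing to prove; then $\mathcal M^{fs}\neq\emptyset$ by Theorem~\ref{TheoremFTAP}, hence $\mathcal M\neq\emptyset$.

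First I would settle part (i). By weak linear programming duality, any superhedge $\pi$ and any $\mu\in\mathcal M$ satisfy $\pi\cdot f=\sum_{d=0}^D\pi_d\int S_d\,d\mu=\int(\pi\cdot S)\,d\mu\ge\int H\,d\mu$ --- here one uses that $\mu$ is a probability measure with $\int S_d\,d\mu=f_d$ for all $d$ and that $\pi\cdot S\ge H$ pointwise --- so the value of SH dominates $\sup_{\mu\in\mathcal M}\int H\,d\mu$; the duality Theorem upgrades this to equality and produces a $P^*\in\mathcal M$ attaining the supremum, so the cheapest superhedge price equals $\max_{\mu\in\mathcal M}\int H\,d\mu$. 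On the other hand, by the no arbitrage price Corollary the least upper bound on the no arbitrage prices of $H$ is exactly $\sup_{\mu\in\mathcal M^{fs}}\int H\,d\mu$. Since $\mathcal M^{fs}\subset\mathcal M$, this is at most the cheapest superhedge price, and only the reverse inequality remains.

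For the reverse inequality I would invoke density. Fix one $P_0\in\mathcal M^{fs}$ (Theorem~\ref{TheoremFTAP}). For any $Q\in\mathcal M$ and $\varepsilon\in(0,1]$ the convex combination $Q_\varepsilon:=(1-\varepsilon)Q+\varepsilon P_0$ is again a martingale measure, and it has full support since $Q_\varepsilon(U)\ge\varepsilon P_0(U)>0$ for every nonempty open $U\subset\Omega$; moreover, $H$ being continuous on the compact space $\Omega$ and hence bounded, $\int H\,dQ_\varepsilon=(1-\varepsilon)\int H\,dQ+\varepsilon\int H\,dP_0\to\int H\,dQ$ as $\varepsilon\downarrow0$, whence $\int H\,dQ\le\sup_{\mu\in\mathcal M^{fs}}\int H\,d\mu$. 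Taking the supremum over $Q\in\mathcal M$ yields $\max_{\mu\in\mathcal M}\int H\,d\mu\le\sup_{\mu\in\mathcal M^{fs}}\int H\,d\mu$, which closes the circle and proves (i). Part (ii) is the mirror image: the dual of the most expensive subhedge problem (the reversed constraint $\pi\cdot S\le H$ still forces $\mu\ge0$, while maximising $\pi\cdot f$ makes the dual minimise $\int H\,d\mu$ over $\mathcal M$) has value $\min_{\mu\in\mathcal M}\int H\,d\mu$, the greatest lower bound on no arbitrage prices is $\inf_{\mu\in\mathcal M^{fs}}\int H\,d\mu$ by the same Corollary, and the identical convex--combination argument identifies the two.

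I do not expect a genuine obstacle; the one delicate point worth flagging is that the bound on no arbitrage prices is a supremum (resp.\ infimum) that is in general \emph{not} attained over $\mathcal M^{fs}$. Indeed, were $\max_{\mu\in\mathcal M}\int H\,d\mu$ realised at some $P^*\in\mathcal M^{fs}$, then pairing it with the optimal superhedge $\pi^*$ of the duality Theorem would make $\pi^*\cdot S-H$ a nonnegative continuous function with $\int(\pi^*\cdot S-H)\,dP^*=\pi^*\cdot f-\int H\,dP^*=0$, and the full support of $P^*$ would force $\pi^*\cdot S\equiv H$, i.e.\ $H\in M$ and $\pi^*$ exactly replicates $H$. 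So outside the trivially complete case $H\in M$ no full support martingale measure attains the bound --- exactly the remark made after the duality Theorem, and the reason the corollary speaks of ``least upper bound''/``greatest lower bound'' rather than maximum/minimum over no arbitrage prices. This caveat does not affect the equalities above, which relate a max/min over $\mathcal M$ to a sup/inf over the dense subset $\mathcal M^{fs}$.
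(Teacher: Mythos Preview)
Your proposal is correct and follows the same route as the paper: combine the duality theorem (cheapest superhedge $=\max_{\mu\in\mathcal M}\int H\,d\mu$), the characterization of no arbitrage prices as $\int H\,dP$ for $P\in\mathcal M^{fs}$, and the density of $\mathcal M^{fs}$ in $\mathcal M$. You in fact supply more than the paper does, since the paper merely asserts the density and you prove it via the convex combination $Q_\varepsilon=(1-\varepsilon)Q+\varepsilon P_0$; your discussion of non--attainment also makes precise the remark the paper leaves informal.
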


\subsubsection{Details on Embedding the Superhedging Problem into the Language of Linear Programming in Infinite--Dimensional Spaces}

We now show how to embed the above superhedging problem and its dual  into the language of linear programming in infinite--dimensional spaces. We will then obtain a dual characterization of the value functions via the infinite--dimensional strong duality theorem. We use the language of the textbook by \cite{AndersonNash87}. Note, however, that we have swapped the primal and the dual program in our formulation.

Before we start our analysis, let us note that both problems are  consistent in the sense that portfolios satisfying the constraints exist: As there exists a riskless asset and $H$ is bounded, superhedges exist. By no arbitrage, there exist martingale measures.

We now embed our problems into the formulation of Anderson and Nash. Let $X=ca\left(\Omega,\cF\right)$
and $Y=C\left(\Omega,d\right)$. The (separating) bilinear form on $X$ and $Y$ is
$$\bilinear{\mu}{K}=\int_\Omega K(\omega) \mu(d\omega)$$ for  $\mu \in X$ and $K\in Y$.
We let $Z=W=\mathbb R^{D+1}$ with the usual bilinear form given by the scalar product. Let us set $c:=-H$. Recall that we introduced the linear mapping $B^* : X \to Y$ with $B\mu=\int S d\mu$ above.  Let $A=-B^*$. The primal program in the sense of Anderson and Nash is then

\begin{problem}[EP]
minimize $\bilinear{\mu}{c}$ over $\mu\in X$  subject to $A \mu = f$ and $\mu \ge 0$.
\end{problem}

Note that this is exactly our problem DSH above.

The adjoint mapping to $A$ is  $A^*=-B$, of course.
The dual problem is now
\begin{problem}[$\mbox{EP}^*$]
maximize $\bilinear{f}{\pi}$ over $\pi\in W$  subject to $A^* \pi \le c  $.
\end{problem}
This is our superhedging problem SH.

According to Theorem 3.10 in \cite{AndersonNash87}, strong duality holds true if the set
$$ C:= \left\{ \left( \int S d\mu, \int c d\mu\right) : \mu \in ca\left(\Omega,\cF\right)_+\right\}\subset \mathbb R^{D+2}$$
is closed. Let $(\mu_n)$ be a sequence of nonnegative finite measures such that
$$ \int S d\mu_n \to z \in \mathbb R^{D+1}, \int H d\mu_n \to r \in \mathbb R$$ as $n \to \infty$. As the zeroth asset is riskless, we conclude that
$\mu_n(\Omega)=\int S_0 d\mu_n$ remains bounded. In particular, the sequence $(\mu_n)$ is relatively compact in the weak*-- (or $\sigma\left(ca\left(\Omega,\cF\right),C\left(\Omega,d\right)\right)$--)topology. As this topology is metrizable (on the set of nonnegative measures, e.g. by the Prohorov distance), we can assume without loss of generality that $(\mu_n)$ converges in the weak*--topology to some $\mu \ge 0$. By continuity of the integral with respect to weak*--convergence, we then get
$$z=\int S \,d\mu, r = \int c \,d\mu$$ and this establishes that $C$ is closed.

We finally have to show that optimal solutions to both problems exist. The set of martingale measures is weak*--compact because it is a closed subset of the weak*--compact set of measures that are bounded by $1$ (Alaoglu's theorem). Hence, a maximizing martingale measure for the continuous linear evaluation $\int H d\mu$ exists.

The argument for the existence of an optimal superhedge is somewhat lengthier (but follows the same route of arguments as in the probabilistic case). Let $(\pi_n)$ be a minimizing sequence of portfolios. If the sequence is bounded, there exists a converging subsequence with limit $\pi^*$, and this is the desired optimal superhedge.

So let us assume that $(\pi_n)$ is unbounded.The sequence $$\eta_n:=\frac{\pi_n}{\|\pi_n\|}$$ is then well--defined for large $n$. Without loss of generality, it converges to some $\eta$ with norm $1$.

We can assume that the asset payoffs $S_0,\ldots, S_D$ are linearly independent, or else we choose a maximal linearly independent market including the riskless asset.
As $\pi_n$ are superhedges, and $H$ is bounded, we obtain
$$ \eta^* \cdot S = \lim \eta_n \cdot S \ge \limsup H / \|\pi_n\| = 0\,.$$
As $\pi_n \cdot f$ converges to the finite value of the superhedging problem, we also get
$$\eta^* \cdot f = 0\,.$$ By no arbitrage, we conclude $\eta^*=0$ which is a contradiction to $\|\eta^*\|=1$.

\section*{Conclusion}

A theory of hedging and pricing of derivative securities can be developed without referring to a prior probability measure that fixes the sets of measure zero. In this sense, one can say that the ideas of hedging and pricing (as far as it is related to and based on hedging) are independent from probability theory, from an epistemological point of view.

When one starts an economic model of uncertain markets,  a financial model, one can impose different strengths of probabilistic sophistication, either implicitly or explicitly. The strongest form of the efficient market hypothesis requires that there is an ``objective'' probability for all events, known by the market, and reflected by returns. The weaker form of the efficient market hypothesis, as nowadays commonly used in Mathematical Finance, just requires that the market participants share the same view of all null and probability one events. Returns and derivative prices are then determined by a martingale measure that is equivalent to the prior probability. Our approach goes a step further and relaxes the requirement that agents and the market agree on null events. It is then still possible to develop a reasonable theory of derivative pricing. The no arbitrage condition is strong enough to introduce a pricing probability even without any probabilistic prior assumption; this pricing probability has full support on the state space, and thus assigns positive probability to all open sets.

\ifx\undefined\BySame
\newcommand{\BySame}{\leavevmode\rule[.5ex]{3em}{.5pt}\ }
\fi
\ifx\undefined\textsc
\newcommand{\textsc}[1]{{\sc #1}}
\newcommand{\emph}[1]{{\em #1\/}}
\let\tmpsmall\small
\renewcommand{\small}{\tmpsmall\sc}
\fi

\end{document}